\providecommand{\U}[1]{\protect\rule{.1in}{.1in}}
\newtheorem{theorem}{Theorem}
\newtheorem{corollary}{Corollary}
\newtheorem{criterion}{Criterion}
\newtheorem{definition}{Definition}
\newtheorem{lemma}{Lemma}
\newtheorem{proposition}{Proposition}
\newtheorem{remark}{Remark}
\newenvironment{proof}[1][Proof]{\textbf{#1.} }{\  \rule{0.5em}{0.5em}}
\def \@removefromreset#1#2{\let \@tempb \@elt
\def \@tempa#1{@&#1}\expandafter \let \csname @*#1*\endcsname \@tempa
\def \@elt##1{\expandafter \ifx \csname @*##1*\endcsname \@tempa \else
\noexpand \@elt{##1}\fi}     \expandafter \edef \csname cl@#2\endcsname{\csname cl@#2\endcsname}     \let \@elt \@tempb
\expandafter \let \csname @*#1*\endcsname \@undefined}
\begin{document}

\title{Specifying nonlocality of a pure bipartite state and analytical relations
between measures for bipartite nonlocality and entanglement }
\author{Elena R. Loubenets$^{1,2}$ and Min Namkung$^{1,3}$\\$^{1}$National Research University Higher School of Economics, \\Moscow 101000, Russia \\$^{2}$ Steklov Mathematical Institute of Russian Academy of Sciences,\\Moscow 119991, Russia\\$^{3}$Department of Applied Mathematics and Institute of Natural Sciences,\\Kyung Hee University, Yongin 17104, Republic of Korea}
\maketitle

\begin{abstract}
For a multipartite quantum state, the maximal violation of all Bell
inequalities constitutes a measure of its nonlocality [Loubenets, \emph{J.
Math. Phys}. 53, 022201 (2012)]. In the present article, for the maximal
violation of Bell inequalities by a pure bipartite state, possibly
infinite-dimensional, we derive \emph{a new upper bound} \emph{expressed in
terms of the Schmidt coefficients of this state. }This new upper bound allows
us also to specify \emph{general analytical relations }between the maximal
violation of Bell inequalities by a bipartite quantum state, pure or mixed,
and such entanglement measures for this state as "negativity" and
"concurrence". To our knowledge, no any general analytical relations between
measures for bipartite nonlocality and entanglement have been reported in the
literature though, for a general bipartite state, specifically such relations
are important for the entanglement certification and quantification scenarios.
As an example, we apply our new results to finding upper bounds on nonlocality
of bipartite coherent states intensively discussed last years in the
literature in view of their experimental implementations.

\end{abstract}

\section{Introduction}

Ever since the seminal paper of Bell \cite{bell} quantum violation of Bell
inequalities was analyzed, analytically and numerically, in many papers and is
now used in many quantum information processing tasks. It is well known that
quantum violation of the Clauser-Horne-Shimony-Holt (CHSH) inequality
\cite{j.f.clauser} cannot exceed
\cite{b.s.cirelson,b.s.tsirelson,r.f.werner,scarani} $\sqrt{2}$ for any
bipartite quantum state, possibly infinite-dimensional. It was also recently
proved \cite{original1, original2} that the maximal quantum violation of the
original Bell inequality\footnote{For the original Bell inequality see also
\cite{loubenets11}.} \cite{bell} is equal to $\frac{3}{2}.$

More generally, quantum violation of any (unconditional) \emph{correlation}
bipartite Bell inequality cannot\footnote{This follows from the definition of
the Grothendieck's constant $K_{G}^{(\mathbb{R})}$\ and Theorem 2.1 in
\cite{b.s.tsirelson}.} exceed the real Grothendiek's constant $K_{G}%
^{(\mathbb{R})}\in\lbrack1.676,1.783]$ but this is not already the case for
quantum violation of a bipartite Bell inequality on joint probabilities and,
more generally,\ quantum violation of a Bell inequality of an arbitrary form,
\emph{a general bipartite Bell inequality }\cite{e.r.loubenets1}, and last
years finding upper bounds on violation by a bipartite quantum state of any
general Bell inequality was intensively discussed within different
mathematical
approaches\ \cite{7,8,e.r.loubenets2,e.r.loubenets2-1,e.r.loubenets6,10,11,12,14,15}%
.

At present, the following general analytical results on quantum violation of
bipartite Bell inequalities are known in the literature:\newline(i) for an
arbitrary bipartite state, pure or mixed, on a Hilbert space $\mathcal{H}%
_{1}\otimes\mathcal{H}_{2}$, $\dim\mathcal{H}_{n}=d_{n}$, violation of any
general Bell inequality\emph{ }with $S_{n}$ settings at $n=1,2$ sites cannot
exceed
\begin{equation}
2\min\{d_{1},d_{2},S_{1},S_{2}\}-1 \label{01}%
\end{equation}
-- in case of generalized quantum measurements (see Eq. 64 in
\cite{e.r.loubenets2}) and \cite{e.r.loubenets4,e.r.loubenets5}%
\begin{align}
\min{\Large \{}d^{\frac{1}{2}},\text{ }3{\Large \}},\text{ \ \ \ \ for \ }S
&  =2,\label{02}\\
\min{\Large \{}d^{\frac{S}{2}},\text{ }2\min\{d,S\}-1{\Large \}},\text{
\ \ \ for }S  &  \geq3,\nonumber
\end{align}
-- in case of projective quantum measurements and $S$ settings per
site;\newline(ii) for the two-qudit Greenberger-Horne-Zeilinger (GHZ) state
$\frac{1}{\sqrt{d}}\sum_{j=1}^{d}|j\rangle^{\otimes2},$ violation of any
general Bell inequality\emph{ }with an arbitrary number of measurement
settings at each of sites admits (Theorem 0.3 in \cite{11}) cannot exceed the
bound $Cd/\sqrt{\ln d}$ where $C$ is an unknown constant independent on a
dimension $d$.

From (\ref{01}) it follows that, for every bipartite quantum state, possibly
infinite-dimensional, violation of any general Bell inequality with fixed
numbers $S_{1},S_{2}\geq1$ of settings at each of two sites is upper bounded
by the value%
\begin{equation}
2\min\{S_{1},S_{2}\}-1, \label{03}%
\end{equation}
whereas, for an arbitrary finite-dimensional bipartite state, pure or mixed,
violation of any general Bell inequality is bounded from above by
\begin{equation}
2\min\{d_{1},d_{2}\}-1. \label{04}%
\end{equation}

In the present article, based on the local quasi hidden variable (LqHV)
formalism developed in \cite{e.r.loubenets2,e.r.loubenets2-1,e.r.loubenets6},
for violation of any general Bell inequality by a pure bipartite state we find
\emph{a new upper bound }which is expressed via the Schmidt coefficients of
this pure state and is tighter than the upper bound (\ref{01}) valid for any
state, pure or mixed.

Based on this new general result, we further specify for a general bipartite
quantum state the analytical relations between its maximal violation of Bell
inequalities on one side and "negativity" and "concurrence" of this state from
the other side. To our knowledge, no any general analytical relations between
measures for bipartite nonlocality and entanglement have been reported in the
literature, though, for a general bipartite state, such relations are
specifically important for finding the minimal amount of entanglement via the
collected experimental data on Bell violation -- the goals of the
semi-device-independent scenario and the device-independent scenario for the
entanglement certification and quantification \cite{000,1,2,3}.

As an example, we apply our new results to finding the upper bounds on
nonlocality of bipartite entangled coherent states intensively discussed last
years in the literature in view of their experimental implementations, see
\cite{m.namkung} and references therein.

The article is organized as follows.

In Section 2, we recall the notion of a general Bell inequality and
specify\ (i) the state parameter characterizing nonlocality a quantum state
and (ii) the analytical upper bound (Theorem 1) on the maximal violation of
Bell inequalities by a bipartite quantum state derived in
\cite{e.r.loubenets2}.

In Section 3, based on Theorem 1, we derive for the maximal violation of Bell
inequalities by an arbitrary pure state new upper bounds (Theorem 2, Corollary
1) expressed in terms of the Schmidt coefficients of this state.

In Section 4, we specify the general analytical relations (Proposition 1 and
Theorem 3) between measures for nonlocality and entanglement of a general
bipartite state, pure or mixed.

In Section 5, we apply our new results to finding upper bounds (Proposition 2)
on the maximal violation of Bell inequalities by bipartite coherent states.

In Section 6, we summarize the main results of the present article.

\section{Preliminaries: general Bell inequalities and nonlocality of a quantum
state}

Consider\footnote{On the probabilistic description of a general correlation
scenario, see \cite{e.r.loubenets0}.} a general bipartite correlation scenario
where each of two participants performs $S_{n}\geq1$, $n=1,2,$ different
measurements, indexed by numbers $s_{n}=1,...,S_{n}$ and with outcomes
$\lambda_{n}\in\Lambda_{n}$. We refer to this correlation scenario as
$S_{1}\times S_{2}$-setting and denote by $\mathrm{P}_{(s_{1},s_{{2}})}%
(\cdot)$ the probability distribution of outcomes $(\lambda_{1},\lambda
_{2})\in\Lambda:=\Lambda_{1}\times\Lambda_{2}$ under the joint measurement
specified by a tuple $(s_{1},s_{2})$ of settings where each $n$-th participant
performs a measurement $s_{n}$ at the $n$-th site. The complete probabilistic
description of such an $S_{1}\times S_{2}$-setting correlation scenario is
given by the family
\begin{align}
\mathcal{P}_{S,\Lambda}  &  :=\left\{  \mathrm{P}_{(s_{1},s_{{2}})}\mid
s_{n}=1,...,S_{n},\text{\ \ \ }n=1,2\right\}  ,\label{1}\\
S  &  :=S_{1}\times S_{2},\nonumber
\end{align}
of joint probability distributions. A correlation scenario admits a local
hidden variable (LHV) model\footnote{For the definition of this notion under a
general correlation scenario, see Definition 4 in \cite{e.r.loubenets0}.} if
each of its joint probability distributions $\mathrm{P}_{(s_{1},s_{2})}%
\in\mathcal{P}_{S,\Lambda}$ admits the representation%
\begin{equation}
\mathrm{P}_{(s_{1},s_{2})}\left(  \mathrm{d}\lambda_{1}\times\mathrm{d}%
\lambda_{2}\right)  =%
{\displaystyle\int\limits_{\Omega}}
P_{1,s_{1}}(\mathrm{d}\lambda_{1}|\omega)\cdot P_{2,s_{{2}}}(\mathrm{d}%
\lambda_{2}|\omega)\nu(\mathrm{d}\omega) \label{2}%
\end{equation}
in terms of a unique probability distribution $\nu(\mathrm{d}\omega)$ of some
variables $\omega\in\Omega$ and conditional probability distributions
$P_{n,s_{n}}(\mathrm{\cdot}|\omega),$ referred to as \textquotedblleft
local\textquotedblright\ in the sense that each $P_{n,s_{n}}(\mathrm{\cdot
}|\omega)$ at $n$-th site depends only on a measurement $s_{n}=1,...,S_{n}$ at
an $n$-th site.

Under an $S_{1}\times S_{2}$-setting correlation scenario described by a
family of joint probability distributions (\ref{1}), consider a linear
combination \cite{e.r.loubenets1}
\begin{equation}
\mathcal{B}_{\Phi_{S,\text{ }\Lambda}}(\mathcal{P}_{S,\text{ }\Lambda}%
):=\sum_{s_{1},s_{{2}}}\left\langle \text{ }\phi_{(s_{1},s_{2})}(\lambda
_{1},\lambda_{2})\right\rangle _{\mathrm{P}_{(s_{1},s_{2})}} \label{3}%
\end{equation}
of the mathematical expectations
\begin{equation}
\left\langle \text{ }\phi_{(s_{1},s_{2})}(\lambda_{1},\lambda_{2}%
)\right\rangle _{\mathrm{P}_{(s_{1},s_{2})}}:=\int\limits_{\Lambda}%
\phi_{(s_{1},s_{2})}(\lambda_{1},\lambda_{2})\text{ }\mathrm{P}_{(s_{1}%
,s_{2})}(\mathrm{d}\lambda_{1}\times\mathrm{d}\lambda_{2}) \label{4}%
\end{equation}
of an arbitrary form, specified by a collection%
\begin{equation}
\Phi_{S,\Lambda}=\left\{  \phi_{(s_{{1}},s_{{2}})}:\Lambda\rightarrow
\mathbb{R}\mid s_{n}=1,...,S_{n};\text{ \ \ }n=1,2\right\}  \label{5}%
\end{equation}
of bounded real-valued functions $\phi_{(s_{{1}},s_{{2}})}$ on $\Lambda
=\Lambda_{1}\times\Lambda_{2}.$ Depending on a choice of a bounded function
$\phi_{(s_{{1}},s_{{2}})}$ and types of outcome sets $\Lambda_{n}$, $n=1,2,$
expression (\ref{4}) can constitute either the probability of some observed
event or if $\Lambda_{n}\subset\mathbb{R},$ $n=1,2,$ the mathematical
expectation (mean) of the product of observed outcomes (called in quantum
information as a correlation function) or have a more complicated form.

If an $S_{1}\times S_{2}$-setting correlation scenario (\ref{1}) admits the
LHV modelling in the sense of representation (\ref{2}), then every linear
combination (\ref{3}) of its mathematical expectations (\ref{4}) satisfies the
\textquotedblleft tight\textquotedblright\footnote{Here, the word a "tight"
LHV constraint means that, in a LHV frame, the bounds established by this
constraint cannot be improved. On the difference between the terms an
"extreme" LHV constraint and a "tight" LHV constraint see section 2.1 of
\cite{e.r.loubenets1}.} LHV constraints \cite{e.r.loubenets1,e.r.loubenets2}
\begin{align}
\mathcal{B}_{\Phi_{S,\Lambda}}^{\inf}  &  \leq\mathcal{B}_{\Phi_{S,\Lambda}%
}(\mathcal{P}_{S,\Lambda})|_{lhv}\leq\mathcal{B}_{\Phi_{S,\Lambda}}^{\sup
},\label{6}\\[0.03in]
\left\vert \text{ }\mathcal{B}_{\Phi_{S,\Lambda}}(\mathcal{P}_{S,\Lambda
})\right\vert _{lhv}  &  \leq\mathcal{B}_{\Phi_{S,\Lambda}}^{lhv}%
:=\max\left\{  \left\vert \mathcal{B}_{\Phi_{S,\Lambda}}^{\sup}\right\vert
,\left\vert \mathcal{B}_{\Phi_{S,\Lambda}}^{\inf}\right\vert \right\}
,\nonumber
\end{align}
where constants
\begin{align}
\mathcal{B}_{\Phi_{S,\Lambda}}^{\sup}  &  :=\sup_{\mathcal{P}_{S,\Lambda}%
\in\mathfrak{G}_{S,\Lambda}^{lhv}}\mathcal{B}_{\Phi_{S,\Lambda}}%
(\mathcal{P}_{S,\Lambda})\label{7}\\[0.03in]
&  =\sup_{\lambda_{n}^{(s_{n})}\in\Lambda_{n},\text{ }\forall s_{n},\text{
}n=1,2}\ \sum_{s_{1},s_{{2}}}\phi_{(s_{1},s_{2})}(\lambda_{1}^{(s_{1}%
)},\lambda_{2}^{(s_{2})}),\nonumber\\
\mathcal{B}_{\Phi_{S,\Lambda}}^{\inf}  &  :=\inf_{\mathcal{P}_{S,\Lambda}%
\in\mathfrak{G}_{S,\Lambda}^{lhv}}\mathcal{B}_{\Phi_{S,\Lambda}}%
(\mathcal{P}_{S,\Lambda})\nonumber\\
&  =\inf_{\lambda_{n}^{(s_{n})}\in\Lambda_{n},\text{ }\forall s_{n},\text{
}n=1,2}\ \sum_{s_{1},s_{{2}}}\phi_{(s_{1},s_{2})}(\lambda_{1}^{(s_{1}%
)},\lambda_{2}^{(s_{2})}).\nonumber
\end{align}
\medskip Here, $\mathfrak{G}_{S,\Lambda}^{lhv}$ denotes the set of all
families (\ref{1}) of joint probability distributions describing $S_{1}\times
S_{2}$-setting correlation scenarios with outcomes in $\Lambda=\Lambda
_{1}\times\Lambda_{2}$ admitting the LHV modelling.

Depending on a form of functional (\ref{3}), which is specified by a family
$\Phi_{S,\Lambda}$ of bounded functions (\ref{6}), some of the LHV constraints
in (\ref{6}) can hold for a wider (than LHV) class of correlation scenarios,
some may be simply trivial, i.e. fulfilled under all correlation scenarios.

\begin{definition}
\cite{e.r.loubenets1,e.r.loubenets2}\emph{\ }Each of the tight linear
LHV\ constraints in (\ref{6}) that can be violated under a non-LHV correlation
scenario is referred to as a general Bell inequality.
\end{definition}

Bell inequalities on correlation functions (like the CHSH inequality) and Bell
inequalities on joint probabilities constitute particular classes of general
Bell inequalities.

If, under a bipartite correlation scenario, all joint measurements
$(s_{1},s_{{2}})$ are performed on a quantum state $\rho$ on a Hilbert space
$\mathcal{H}_{1}\otimes\mathcal{H}_{2}$, then each joint probability
distribution $\mathrm{P}_{(s_{1},s_{{2}})}$ in (\ref{1})\ takes the form
\begin{equation}
\mathrm{P}_{(s_{1},s_{{2}})}(\mathrm{d}\lambda_{1}\times\mathrm{d}\lambda
_{2})=\mathrm{tr}[\rho\{\mathrm{M}_{1}^{(s_{1})}(\mathrm{d}\lambda_{1}%
)\otimes\mathrm{M}_{2}^{(s_{{2}})}(\mathrm{d}\lambda_{2})\}], \label{8}%
\end{equation}
where $\mathrm{M}_{n}^{(s_{n})}(\cdot),$ $\mathrm{M}_{n}^{(s_{n})}(\Lambda
_{n})=\mathbb{I}_{\mathcal{H}_{n}},$ is a normalized positive operator-valued
(POV) measure, describing $s_{n}$-th quantum measurement at $n$-th site. For
this correlation scenario, we denote the family (\ref{1}) of joint probability
distributions by%
\begin{equation}
\mathcal{P}_{S,\Lambda}^{(\rho,\mathfrak{m}_{S,\Lambda})}:=\left\{
\mathrm{tr}[\rho\{\mathrm{M}_{1}^{(s_{1})}(\mathrm{d}\lambda_{1}%
)\otimes\mathrm{M}_{2}^{(s_{{2}})}(\mathrm{d}\lambda_{2})\}],\text{ }%
s_{n}=1,...,S_{n},\text{ }n\in1,2\right\}  , \label{9}%
\end{equation}
where
\begin{equation}
\mathfrak{m}_{S,\Lambda}:=\left\{  \mathrm{M}_{n}^{(s_{n})}\mid\text{ }%
s_{n}=1,...,S_{n},\text{ }n\in1,2\right\}  \label{10}%
\end{equation}
is the collection of all local POV measures at two sites, describing this
quantum correlation scenario.

For a quantum $S_{1}\times S_{2}$-setting correlation scenario\ (\ref{9})
performed\emph{ }on a state $\rho$ on $\mathcal{H}_{1}\otimes\mathcal{H}_{2}$,
possibly infinite-dimensional, every linear combination (\ref{3}) of its
mathematical expectations (\ref{4}) satisfies the \textquotedblleft
tight\textquotedblright\footnote{See Eq. (48) and Lemma 3 in
\cite{e.r.loubenets2}.}\ constraints \cite{e.r.loubenets2}:%
\begin{align}
&  \mathcal{B}_{\Phi_{S,\Lambda}}^{\inf}-\frac{\mathrm{\Upsilon}_{S_{1}\times
S_{2}}^{(\rho,\Lambda)}-1}{2}(\mathcal{B}_{\Phi_{S,\Lambda}}^{\sup
}-\mathcal{B}_{\Phi_{S,\Lambda}}^{\inf})\label{11}\\
&  \leq\mathcal{B}_{\Phi_{S,\Lambda}}(\mathcal{P}_{S,\Lambda}^{\rho
,\mathfrak{m}_{S,\Lambda}})\nonumber\\
&  \leq\mathcal{B}_{\Phi_{S,\Lambda}}^{\sup}+\frac{\mathrm{\Upsilon}%
_{S_{1}\times S_{2}}^{(\rho,\Lambda)}-1}{2}(\mathcal{B}_{\Phi_{S,\Lambda}%
}^{\sup}-\mathcal{B}_{\Phi_{S,\Lambda}}^{\inf}),\nonumber
\end{align}
where
\begin{equation}
1\leq\mathrm{\Upsilon}_{S_{1}\times S_{2}}^{(\rho,\Lambda)}:=\sup_{_{\text{
}\mathfrak{m}_{S,\Lambda},\Phi_{S,\Lambda},\mathcal{B}\text{ }_{\Phi
_{S,\Lambda}}^{lhv}\neq0}}\frac{\left\vert \mathcal{B}_{\Phi_{S,\Lambda}%
}(\mathcal{P}_{S,\Lambda}^{\rho,\mathfrak{m}_{S,\Lambda}})\right\vert
}{\mathcal{B}_{\Phi_{S,\Lambda}}^{lhv}} \label{12}%
\end{equation}
is the maximal violation by a state $\rho$\ of all $S_{1}\times S_{2}$-setting
general Bell inequalities with outcomes $(\lambda_{1},\lambda_{2})\in\Lambda$.

Denote by \cite{e.r.loubenets2}
\begin{align}
1  &  \leq\mathrm{\Upsilon}_{S_{1}\times S_{2}}^{(\rho)}:=\sup_{\Lambda
}\mathrm{\Upsilon}_{S_{1}\times S_{2}}^{(\rho,\Lambda)}\label{16}\\
&  =\sup_{_{\text{ }\Lambda,\mathfrak{m}_{_{S,\Lambda}},\Phi_{_{S,\Lambda}%
},\text{ }\mathcal{B}\text{ }_{\Phi_{S,\Lambda}}^{lhv}\neq0}}\frac{\left\vert
\mathcal{B}_{\Phi_{S,\Lambda}}(\mathcal{P}_{S,\Lambda}^{\rho,\mathfrak{m}%
_{S,\Lambda}})\right\vert }{\mathcal{B}_{\Phi_{S,\Lambda}}^{lhv}}\nonumber
\end{align}
the maximal violation by a state $\rho$\ of all $S_{1}\times S_{2}$-setting
general Bell inequalities for any type of outcomes, discrete or continuous, at
each of two sites.

If, for a state $\rho,$\ the maximal violation $\mathrm{\Upsilon}_{S_{1}\times
S_{2}}^{(\rho)}$ is bounded from above by a value independent on numbers
$S_{1},S_{2}\geq1$, then the parameter \cite{e.r.loubenets4,e.r.loubenets6}%
\begin{equation}
1\leq\mathrm{\Upsilon}_{\rho}:=\sup_{S_{1},S_{2}}\mathrm{\Upsilon}%
_{S_{1}\times S_{2}}^{(\rho)} \label{17}%
\end{equation}
constitutes the maximal violation by a state $\rho$\ of all general Bell
inequalities with any numbers of settings at each site and constitutes a
measure for nonlocality of a state $\rho$ under measurements with all possible
numbers of setting at each of sites.

\begin{definition}
A multipartite quantum state is called nonlocal if it violates a Bell inequality.
\end{definition}

Definition 2 and Eq. (\ref{16}) imply \cite{e.r.loubenets2,e.r.loubenets6}.

\begin{criterion}
A bipartite state $\rho$ is local if and only if parameter $\mathrm{\Upsilon
}_{S_{1}\times S_{2}}^{(\rho)}=1$ for all $S_{1},S_{2}\geq1$ and is nonlocal
if and only if $\mathrm{\Upsilon}_{S_{1}\times S_{2}}^{(\rho)}>1$ for some
numbers $S_{1},S_{2}$ of measurement settings at each of sites.
\end{criterion}

Therefore, parameter $\mathrm{\Upsilon}_{S_{1}\times S_{2}}^{(\rho)}$
constitutes a measure for nonlocality of a bipartite quantum state $\rho$
under correlation scenarios with fixed numbers $S_{1},S_{2}$ of settings at
each of two sites while parameter $\mathrm{\Upsilon}_{\rho}$ is a measure for
nonlocality of a state $\rho$ under correlation scenarios with any numbers of
settings at each of two sites.

According to the upper bound (\ref{01}), for a finite-dimensional bipartite
state, pure or mixed, on $\mathcal{H}_{1}\otimes\mathcal{H}_{2}$,
$\dim\mathcal{H}_{n}=d_{n}$,%
\begin{equation}
\mathrm{\Upsilon}_{\rho}\leq2\min\{d_{1},d_{2}\}-1. \label{18}%
\end{equation}

Let $T_{S_{1}\times S_{2}}^{(\rho)}$ be a self-adjoint trace class dilation of
a state $\rho$ on $\mathcal{H}_{1}\otimes\mathcal{H}_{2}$ to the Hilbert space
$\mathcal{H}_{1}^{\otimes S_{1}}\otimes\mathcal{H}_{2}^{\otimes S_{2}}$. By
its definition%
\begin{align}
&  \mathrm{tr}\left[  T_{S_{1}\times S_{2}}^{(\rho)}\left\{  \mathbb{I}%
_{\mathcal{H}_{1}^{\otimes k_{1}}}\otimes X_{1}\otimes\mathbb{I}%
_{\mathcal{H}_{1}^{\otimes(S_{1}-1-k_{1})}}\otimes\mathbb{I}_{\mathcal{H}%
_{2}^{\otimes k_{2}}}\otimes X_{2}\otimes\mathbb{I}_{\mathcal{H}_{2}%
^{\otimes(S_{2}-1-k_{2})}}\right\}  \right] \label{19}\\
&  =\mathrm{tr}\left[  \rho\left\{  X_{1}\otimes X_{2}\right\}  \right]
,\text{ \ \ }k_{n}=0,...,(S_{n}-1),\text{ \ \ }n=1,2,\nonumber
\end{align}
for all bounded operators $X_{n}$ on $\mathcal{H}_{n}$, $n=1,2.$ Clearly,
$T_{1\times1}^{(\rho)}=\rho$, $\mathrm{tr}[T_{S_{1}\times S_{2}}^{(\rho
)}]=1\ $and $\left\Vert T_{S_{1}\times S_{2}}^{(\rho)}\right\Vert _{1}\geq1,$
where $\left\Vert \cdot\right\Vert _{1}$ means the trace norm.

In \cite{e.r.loubenets2,e.r.loubenets8}, we call a self-adjoint trace class
operator $T_{S_{1}\times S_{2}}^{(\rho)}$ as \emph{an}\textrm{ }$S_{1}\times
S_{2}$-\emph{setting source operator for a state} $\rho$ on $\mathcal{H}%
_{1}\otimes\mathcal{H}_{2}.$ As proved\footnote{This follows from Proposition
1 in \cite{e.r.loubenets2} for a general $N$-partite case.} in
\cite{e.r.loubenets2}, for every bipartite state $\rho$ and arbitrary
integers\emph{\ }$S_{1},S_{2}\geq1$, a source operator $T_{S_{1}\times S_{2}%
}^{(\rho)}$ exists.

\begin{remark}
For a separable quantum state, there always exists a positive source operator.
However, for an arbitrary bipartite quantum state, a source operator
$T_{S_{1}\times S_{2}}^{(\rho)}$ on $\mathcal{H}_{1}^{\otimes S_{1}}%
\otimes\mathcal{H}_{2}^{\otimes S_{2}}$ does not need to be either positive
or, more generally, tensor positive. The latter general notion introduced in
\cite{e.r.loubenets2} means that
\begin{equation}
\mathrm{tr}\left[  T_{S_{1}\times S_{2}}^{(\rho)}\left\{  \text{ }A_{1}%
\otimes\cdots\otimes A_{S_{1}}\otimes B_{1}\otimes\cdots\otimes B_{S_{2}%
}\right\}  \right]  \geq0 \label{20}%
\end{equation}
for all positive bounded operators $A_{k},B_{m}$ on $\mathcal{H}_{1}$ and
$\mathcal{H}_{2},$ respectively.
\end{remark}

Theorem 3 in \cite{e.r.loubenets2} and, more precisely, the second line of Eq.
(53) in this theorem, imply the following analytical upper bound on
$\mathrm{\Upsilon}_{S_{1}\times S_{2}}^{(\rho)}.$

\begin{theorem}
For an arbitrary bipartite quantum state $\rho$, possibly infinite
dimensional, and any integers $S_{n}\geq1,$ $n=1,2,$ the maximal violation
$\mathrm{\Upsilon}_{S_{1}\times S_{2}}^{(\rho)}$ by state $\rho$\ of all
$S_{1}\times S_{2}$-setting general Bell inequalities satisfies the relations%
\begin{align}
1  &  \leq\mathrm{\Upsilon}_{S_{1}\times S_{2}}^{(\rho)}\leq\min\left\{
\inf_{T_{S_{1}\times1}^{(\rho)}}\left\Vert T_{S_{1}\times1}^{(\rho
)}\right\Vert _{1},\inf_{T_{1\times S_{2}}^{(\rho)}}\left\Vert T_{1\times
S_{2}}^{(\rho)}\right\Vert _{1}\right\} \label{21}\\
&  \leq\inf_{T_{S_{1}\times S_{2}}^{(\rho)}}\left\Vert T_{S_{1}\times S_{2}%
}^{(\rho)}\right\Vert _{1},\nonumber
\end{align}
where $\left\Vert \cdot\right\Vert _{1}$ is the trace norm and $T_{S_{1}%
\times1}^{(\rho)},$ $T_{1\times S_{2}}^{(\rho)}$, $T_{S_{1}\times S_{2}%
}^{(\rho)}$ are source operators of state $\rho$ on Hilbert spaces
$\mathcal{H}_{1}^{\otimes S_{1}}\otimes\mathcal{H}_{2},$ $\mathcal{H}%
_{1}\otimes\mathcal{H}_{2}^{\otimes S_{2}}$ and $\mathcal{H}_{1}^{\otimes
S_{1}}\otimes\mathcal{H}_{2}^{\otimes S_{2}},$ respectively.
\end{theorem}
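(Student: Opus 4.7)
The plan is to leverage the LqHV decomposition provided by the source operator. Fix any source operator $T_{S_1\times S_2}^{(\rho)}$, any POV-measure family $\mathfrak{m}_{S,\Lambda}$, and any Bell functional $\Phi_{S,\Lambda}$. Using the defining identity (\ref{19}), each joint probability of the quantum scenario (\ref{9}) rewrites as the trace of $T_{S_1\times S_2}^{(\rho)}$ against the operator that inserts $\mathrm{M}_1^{(s_1)}(\mathrm{d}\lambda_1)$ and $\mathrm{M}_2^{(s_2)}(\mathrm{d}\lambda_2)$ into tensor positions $s_1$ and $S_1+s_2$ respectively, with identities elsewhere. Summing over all settings, the Bell functional becomes the integral on $\Lambda_1^{S_1}\times\Lambda_2^{S_2}$ of the bounded function $f(\lambda_1^{(1)},\ldots,\lambda_2^{(S_2)}):=\sum_{s_1,s_2}\phi_{(s_1,s_2)}(\lambda_1^{(s_1)},\lambda_2^{(s_2)})$ against the scalar (signed) measure produced by tracing $T_{S_1\times S_2}^{(\rho)}$ against the commuting product POV-measure $\bigotimes_k \mathrm{M}_1^{(k)}\otimes\bigotimes_m \mathrm{M}_2^{(m)}$.

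Next, Jordan-decompose $T_{S_1\times S_2}^{(\rho)}=T_+-T_-$ with $T_\pm\geq 0$ and $\|T_{S_1\times S_2}^{(\rho)}\|_1=\mathrm{tr}[T_+]+\mathrm{tr}[T_-]$, and set $\rho_\pm:=T_\pm/\mathrm{tr}[T_\pm]$. For each sign, sampling the hidden variable $\omega=(\lambda_1^{(1)},\ldots,\lambda_2^{(S_2)})$ from the joint quantum distribution induced by the commuting product measurement on $\rho_\pm$, and then responding under setting $(s_1,s_2)$ by the deterministic projections $\lambda_n\mapsto\lambda_n^{(s_n)}$ at each site $n=1,2$, furnishes an LHV model in the sense of (\ref{2}) for the scenario associated to $\rho_\pm$ and $\mathfrak{m}_{S,\Lambda}$. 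The tight LHV constraint (\ref{6}) therefore gives $|\mathcal{B}_{\Phi_{S,\Lambda}}(\mathcal{P}^{\rho_\pm,\mathfrak{m}_{S,\Lambda}})|\leq\mathcal{B}_{\Phi_{S,\Lambda}}^{lhv}$, and combining the two signs via the triangle inequality yields
\[
\bigl|\mathcal{B}_{\Phi_{S,\Lambda}}(\mathcal{P}_{S,\Lambda}^{\rho,\mathfrak{m}_{S,\Lambda}})\bigr|\leq \|T_{S_1\times S_2}^{(\rho)}\|_1\,\mathcal{B}_{\Phi_{S,\Lambda}}^{lhv}.
\]
Dividing by $\mathcal{B}_{\Phi_{S,\Lambda}}^{lhv}$, taking suprema over $\Lambda$, $\mathfrak{m}_{S,\Lambda}$ and $\Phi_{S,\Lambda}$ in (\ref{12}) and (\ref{16}), and finally the infimum over source operators, produces the last inequality in (\ref{21}); the lower bound $\mathrm{\Upsilon}_{S_1\times S_2}^{(\rho)}\geq 1$ is immediate from the trivial choice $\phi_{(s_1,s_2)}\equiv 1$, which saturates the LHV bound under every quantum scenario.

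The sharper asymmetric bounds come by running the same argument with a hybrid quantum/classical description: for $T_{S_1\times 1}^{(\rho)}$, site 1 is classicalised exactly as above, while at site 2 the response is the genuine quantum distribution produced by the conditional state on the undilated $\mathcal{H}_2$, which still satisfies the locality requirement in (\ref{2}) because it depends only on $s_2$; an analogous construction handles $T_{1\times S_2}^{(\rho)}$. Equivalently, since the partial trace contracts the trace norm and sends any $T_{S_1\times S_2}^{(\rho)}$ to valid source operators $T_{S_1\times 1}^{(\rho)}$ and $T_{1\times S_2}^{(\rho)}$ on the smaller dilated spaces, one has $\min\{\inf\|T_{S_1\times 1}^{(\rho)}\|_1,\inf\|T_{1\times S_2}^{(\rho)}\|_1\}\leq \inf\|T_{S_1\times S_2}^{(\rho)}\|_1$, closing the chain (\ref{21}). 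The main obstacle is verifying rigorously that the Jordan-part construction honestly fulfils the LHV locality condition (\ref{2})—in particular that the deterministic site-$n$ projection of $\omega$ depends only on $s_n$—and that, in the infinite-dimensional setting, the commuting product POV-measure $\bigotimes_k \mathrm{M}_1^{(k)}\otimes\bigotimes_m \mathrm{M}_2^{(m)}$ and the associated signed scalar measure from $T_{S_1\times S_2}^{(\rho)}$ are legitimate, which requires the spectral theorem for self-adjoint trace-class operators together with standard measure-theoretic extension and Fubini-type arguments.
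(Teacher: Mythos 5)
Your proposal is correct and follows essentially the same route as the paper, which establishes Theorem 1 by invoking the LqHV construction of Theorem 3 (Eq. (53)) in \cite{e.r.loubenets2}: representing the quantum Bell functional via the signed measure induced by a source operator, splitting it by the Jordan decomposition into normalized LHV-modelable parts weighted by $\mathrm{tr}[T_{\pm}]$, and bounding the violation by $\Vert T\Vert_{1}$ before taking infima. Your partial-trace observation correctly accounts for the second inequality in (\ref{21}), so no gap remains beyond the measure-theoretic technicalities you already flag.
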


In the following Section, based on Theorem 1, we find for a pure bipartite
state $|\psi\rangle\langle\psi|$ the new upper bounds on its maximal Bell
violation $\mathrm{\Upsilon}_{S_{1}\times S_{2}}^{(|\psi\rangle\langle\psi|)}%
$. These new upper bounds are expressed in terms of the Schmidt coefficients
of a pure state and are tighter than the upper bound (\ref{01}) valid for any
bipartite state, pure or mixed.

\section{New upper bounds for a pure state}

Recall that, for any pure bipartite state $|\psi\rangle\langle\psi|$ on
$\mathcal{H}_{1}\otimes\mathcal{H}_{2}$, $\dim\mathcal{H}_{n}=d_{n},$ the
non-zero eigenvalues $0<\lambda_{k}(\psi)\leq1$ of its reduced states on
$\mathcal{H}_{1}$ and $\mathcal{H}_{2}$ coincide and have the same
multiplicity while vector $|\psi\rangle\in\mathcal{H}_{1}\otimes
\mathcal{H}_{2}$ admits the Schmidt decomposition%
\begin{equation}
|\psi\rangle=\sum_{1\leq k\leq r_{sch}^{(\psi)}}\sqrt{\lambda_{k}(\psi)\text{
}}|e_{k}^{(1)}\rangle\otimes|e_{k}^{(2)}\rangle,\text{ \ \ }\sum_{1\leq k\leq
r_{sch}^{(\psi)}}\lambda_{k}(\psi)=1, \label{22}%
\end{equation}
where each eigenvalue of the reduced states is taken in this sum according to
its multiplicity and $|e_{k}^{(n)}\rangle\in\mathcal{H}_{n}$, $n=1,2,$ are the
normalized eigenvectors of the reduced states of a pure state $|\psi
\rangle\langle\psi|$ on $\mathcal{H}_{1}\otimes\mathcal{H}_{2}$. Parameters
$\sqrt{\lambda_{k}(\psi)\text{ }}$and $1\leq r_{sch}^{(\psi)}\leq
d:=\min\{d_{1},d_{2}\}$ are called the Schmidt coefficients and the Schmidt
rank of $|\psi\rangle,$ respectively. For a separable pure bipartite state,
its Schmidt rank equals to $1$.

It is easy to check that the self-adjoint trace class operators
\begin{align}
T_{1\times S_{2}}^{(\psi)}  &  :=\sum_{k,k_{1}}\sqrt{\lambda_{k}(\psi
)\lambda_{k_{1}}(\psi)}|e_{k}^{(1)}\rangle\langle e_{k_{1}}^{(1)}|\otimes
W_{kk_{1}}^{(2,S_{2})},\label{24}\\
T_{S_{1}\times1}^{(\psi)}  &  :=\sum_{k,k_{1}}\sqrt{\lambda_{k}(\psi
)\lambda_{k_{1}}(\psi)}W_{kk_{1}}^{(1_{,}S_{1})}\otimes|e_{k}^{(2)}%
\rangle\langle e_{k_{1}}^{(2)}|,\nonumber
\end{align}
on $\mathcal{H}_{1}\otimes\mathcal{H}_{2}^{\otimes S_{2}}$ and $\mathcal{H}%
_{1}^{\otimes S_{1}}\otimes\mathcal{H}_{2},$ respectively, where
\begin{align}
W_{kk}^{(n,Sn)}  &  :=\left(  |e_{k}^{(n)}\rangle\langle e_{k}^{(n)}|\right)
^{\otimes S_{n}},\label{25}\\
W_{k\neq k_{1}}^{(n,S_{n})}  &  :=\frac{\left(  |e_{k}^{(n)}+e_{k_{1}}%
^{(n)}\rangle\langle e_{k}^{(n)}+e_{k_{1}}^{(n)}|\right)  ^{\otimes S_{n}}%
}{2^{S_{n}+1}}-\frac{\left(  |e_{k}^{(n)}-e_{k_{1}}^{(n)}\rangle\langle
e_{k}^{(n)}-e_{k_{1}}^{(n)}|\right)  ^{\otimes S_{n}}}{2^{S_{n}+1}}\nonumber\\
&  +i\frac{\left(  |e_{k}^{(n)}+ie_{k_{1}}^{(n)}\rangle\langle e_{k}%
^{(n)}+ie_{k_{1}}^{(n)}|\right)  ^{\otimes S_{N}}}{2^{S_{n}+1}}-i\frac{\left(
|e_{k}^{(n)}-ie_{k_{1}}^{(n)}\rangle\langle e_{k}^{(n)}-ie_{k_{1}}%
^{(n)}|\right)  ^{\otimes S_{n}}}{2^{Sn+1}},\nonumber\\
n  &  =1,2,\nonumber
\end{align}
constitute the $1\times S_{2}$-setting and $S_{1}\times1$-setting source
operators of a pure bipartite state $|\psi\rangle\langle\psi|$.

For source operators (\ref{24}), the trace norms admit the bound
\begin{equation}
\left\Vert T_{1\times S_{2}}^{(\psi)}\right\Vert _{1},\left\Vert
T_{S_{1}\times1}^{(\psi)}\right\Vert _{1}\leq2\left(  \sum_{k}\sqrt
{\lambda_{k}(\psi)}\right)  ^{2}-1, \label{26}%
\end{equation}
which does not depend on numbers $S_{1},S_{2}$ of measurement settings at each
of two sites. Note that
\begin{equation}
2\left(  \sum_{k}\sqrt{\lambda_{k}(\psi)}\right)  ^{2}-1\leq2r_{sch}^{(\psi
)}-1. \label{27}%
\end{equation}
In view of relations (\ref{21}),(\ref{26}) and (\ref{27}) and bound
(\ref{01}), we derive the following new result.

\begin{theorem}
For an arbitrary pure bipartite state $|\psi\rangle\in\mathcal{H}_{1}%
\otimes\mathcal{H}_{2},$ the maximal violation $\mathrm{\Upsilon}_{S_{1}\times
S_{2}}^{(|\psi\rangle\langle\psi|)}$ of $S_{1}\times S_{2}$-setting general
Bell inequalities for any number and type of outcomes at each of sites admits
the bound%
\begin{align}
\mathrm{\Upsilon}_{S_{1}\times S_{2}}^{(|\psi\rangle\langle\psi|)}  &
\leq2\min\left\{  \left(  \sum_{1\leq k\leq r_{sch}^{(\psi)}}\sqrt{\lambda
_{k}(\psi)}\right)  ^{2},S_{1},S_{2}\right\}  -1\label{28}\\
&  \leq2\min\left\{  r_{sch}^{(\psi)},\text{ }S_{1},S_{2}\right\}  -1,
\label{29}%
\end{align}
where $\sqrt{\lambda_{k}(\psi)}$ are the Schmidt coefficients and
$r_{sch}^{(\psi)}$ is the Schmidt rank of a pure bipartite state $|\psi
\rangle.$
\end{theorem}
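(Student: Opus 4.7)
The plan is to apply Theorem~1 to the explicit ansatz (\ref{24})--(\ref{25}) and to bound the resulting trace norm by the triangle inequality. The algebraic backbone is the Schmidt decomposition (\ref{22}), under which every product expectation on the pure state becomes
\begin{equation*}
\langle\psi|X_{1}\otimes X_{2}|\psi\rangle=\sum_{k,k_{1}}\sqrt{\lambda_{k}(\psi)\lambda_{k_{1}}(\psi)}\,\langle e_{k_{1}}^{(1)}|X_{1}|e_{k}^{(1)}\rangle\langle e_{k_{1}}^{(2)}|X_{2}|e_{k}^{(2)}\rangle.
\end{equation*}
Matching this term by term with the defining identity (\ref{19}) for a source operator dictates what the pieces $W_{kk_{1}}^{(n,S_{n})}$ are engineered to do.

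First I would verify that $T_{1\times S_{2}}^{(\psi)}$ in (\ref{24}) is indeed a $1\times S_{2}$-setting source operator. By the display above, it suffices to establish that for every bounded $X_{2}$ on $\mathcal{H}_{2}$ and every position $k_{2}=0,\ldots,S_{2}-1$,
\begin{equation*}
\mathrm{tr}\bigl[W_{kk_{1}}^{(2,S_{2})}\bigl(\mathbb{I}^{\otimes k_{2}}\otimes X_{2}\otimes\mathbb{I}^{\otimes(S_{2}-1-k_{2})}\bigr)\bigr]=\langle e_{k_{1}}^{(2)}|X_{2}|e_{k}^{(2)}\rangle.
\end{equation*}
For $k=k_{1}$ this is immediate. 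For $k\neq k_{1}$, each of the four rank-one tensors $(|v\rangle\langle v|)^{\otimes S_{2}}$ with $\|v\|^{2}=2$ contributes $\|v\|^{2(S_{2}-1)}\langle v|X_{2}|v\rangle=2^{S_{2}-1}\langle v|X_{2}|v\rangle$, so the prefactor $1/2^{S_{2}+1}$ in (\ref{25}) is exactly calibrated to cancel the $2^{S_{2}-1}$ and reduce the sum to the standard polarization identity at $S_{2}=1$, which rebuilds $|e_{k}^{(2)}\rangle\langle e_{k_{1}}^{(2)}|$. The identical argument handles $T_{S_{1}\times 1}^{(\psi)}$.

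Next, each $(|v\rangle\langle v|)^{\otimes S_{n}}$ with $\|v\|^{2}=2$ is rank one with nonzero eigenvalue $2^{S_{n}}$, so its trace norm equals $2^{S_{n}}$; division by $2^{S_{n}+1}$ then gives $\|W_{k\neq k_{1}}^{(n,S_{n})}\|_{1}\leq 2$, while $\|W_{kk}^{(n,S_{n})}\|_{1}=1$. Combined with $\||e_{k}^{(1)}\rangle\langle e_{k_{1}}^{(1)}|\|_{1}=1$, the triangle inequality on (\ref{24}) yields
\begin{equation*}
\|T_{1\times S_{2}}^{(\psi)}\|_{1}\leq\sum_{k}\lambda_{k}(\psi)+2\sum_{k\neq k_{1}}\sqrt{\lambda_{k}(\psi)\lambda_{k_{1}}(\psi)}=2\Big(\sum_{k}\sqrt{\lambda_{k}(\psi)}\Big)^{\!2}-1,
\end{equation*}
i.e.\ (\ref{26}), and an identical bound holds for $T_{S_{1}\times 1}^{(\psi)}$. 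Feeding these into Theorem~1 produces $\mathrm{\Upsilon}_{S_{1}\times S_{2}}^{(|\psi\rangle\langle\psi|)}\leq 2(\sum_{k}\sqrt{\lambda_{k}(\psi)})^{2}-1$, and intersecting with (\ref{01}) (which supplies the further candidates $2S_{1}-1$ and $2S_{2}-1$) gives (\ref{28}). The second inequality (\ref{29}) is Cauchy--Schwarz: $(\sum_{k=1}^{r_{sch}^{(\psi)}}\sqrt{\lambda_{k}(\psi)})^{2}\leq r_{sch}^{(\psi)}\sum_{k}\lambda_{k}(\psi)=r_{sch}^{(\psi)}$.

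The main obstacle is the source-operator verification: the polarization construction in (\ref{25}) must be tuned so that, once $S_{n}-1$ of the $\mathcal{H}_{n}$-factors are paired with identities, every diagonal matrix element $\langle e_{k}|X_{n}|e_{k}\rangle$ and $\langle e_{k_{1}}|X_{n}|e_{k_{1}}\rangle$ cancels exactly and only the single off-diagonal $\langle e_{k_{1}}|X_{n}|e_{k}\rangle$ survives, with unit coefficient. Recognising the prefactor $1/2^{S_{n}+1}$ as compensating the $\|v\|^{2(S_{n}-1)}=2^{S_{n}-1}$ coming out of the trace is the key bookkeeping step; once it is in place, the trace-norm bound is a one-line triangle inequality and (\ref{29}) is Cauchy--Schwarz.
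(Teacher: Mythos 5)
Your proposal is correct and follows essentially the same route as the paper: it takes the explicit source operators (\ref{24})--(\ref{25}), bounds their trace norms via the triangle inequality to obtain (\ref{26}), feeds this into Theorem~1 together with the bound (\ref{01}) to get (\ref{28}), and deduces (\ref{29}) from the Cauchy--Schwarz estimate (\ref{27}). The only difference is that you supply the polarization-identity verification and the norm bookkeeping that the paper leaves as ``easy to check,'' and your accounting of the $2^{S_n-1}$ versus $2^{S_n+1}$ factors is accurate.
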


In view of the relation%
\begin{align}
&  2\min\left\{  r_{sch}^{(\psi)},\text{ }S_{1},S_{2}\right\}  -1\label{30}\\
&  \leq2\min\{d_{1},d_{2},S_{1},S_{2}\}-1,\nonumber
\end{align}
the upper bounds (\ref{28}), (\ref{29}) for a pure state are tighter than the
upper bound (\ref{01}) valid for any state, pure or mixed.

Theorem 2 and relation (\ref{17}) imply

\begin{corollary}
For an arbitrary pure bipartite state $|\psi\rangle\langle\psi|$, possibly
infinite dimensional, the maximal violation $\mathrm{\Upsilon}_{|\psi
\rangle\langle\psi|}$ by this state of general Bell inequalities for any type
of outcomes and any number of settings at each of sites admits the bounds%
\begin{align}
1  &  \leq\mathrm{\Upsilon}_{|\psi\rangle\langle\psi|}\leq2\left(  \sum_{1\leq
k\leq r_{sch}^{(\psi)}}\sqrt{\lambda_{k}(\psi)}\right)  ^{2}-1\label{31}\\
&  \leq2r_{sch}^{(\psi)}-1, \label{32}%
\end{align}

\end{corollary}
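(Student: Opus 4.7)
The plan is to obtain Corollary 1 as an immediate consequence of Theorem 2 by passing to the supremum over the numbers $S_{1},S_{2}$ of measurement settings. By the definition in Eq.~(\ref{17}), the state-wise nonlocality parameter satisfies $\mathrm{\Upsilon}_{|\psi\rangle\langle\psi|}=\sup_{S_{1},S_{2}\geq1}\mathrm{\Upsilon}_{S_{1}\times S_{2}}^{(|\psi\rangle\langle\psi|)}$, so it suffices to bound this supremum.

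First I would invoke the bound (\ref{28}) from Theorem 2,
\begin{equation*}
\mathrm{\Upsilon}_{S_{1}\times S_{2}}^{(|\psi\rangle\langle\psi|)}\leq 2\min\Bigl\{\bigl(\textstyle\sum_{k}\sqrt{\lambda_{k}(\psi)}\bigr)^{2},\,S_{1},\,S_{2}\Bigr\}-1.
\end{equation*}
The key feature is that one of the three arguments of the minimum, namely $\bigl(\sum_{k}\sqrt{\lambda_{k}(\psi)}\bigr)^{2}$, depends only on the Schmidt data of $|\psi\rangle$ and is independent of $S_{1},S_{2}$. Consequently the minimum is bounded above, uniformly in $S_{1},S_{2}$, by this state-dependent quantity, and as $S_{1},S_{2}\to\infty$ the minimum saturates it. Taking the supremum over $S_{1},S_{2}\geq1$ therefore yields exactly the upper inequality in (\ref{31}). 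The bound (\ref{32}) then follows from the already recorded relation (\ref{27}), which is itself an immediate Cauchy--Schwarz consequence of the normalization $\sum_{k}\lambda_{k}(\psi)=1$. The lower bound $\mathrm{\Upsilon}_{|\psi\rangle\langle\psi|}\geq 1$ is immediate from the definition (\ref{12}) of $\mathrm{\Upsilon}_{S_{1}\times S_{2}}^{(\rho,\Lambda)}$ together with Criterion~1 (or from noting that the trivial $1\times1$ scenario contributes $\mathrm{\Upsilon}_{1\times1}^{(\rho)}=1$).

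I do not expect any genuine obstacle: the statement is a direct corollary, and the entire argument is just the supremum manipulation above combined with estimates already established in Section 3. The only point that merits a brief remark is the infinite-dimensional case, where the Schmidt rank $r_{\mathrm{sch}}^{(\psi)}$ may be infinite and the sum $\sum_{k}\sqrt{\lambda_{k}(\psi)}$ may diverge. In that situation (\ref{31}) is informative only when this sum is finite, while (\ref{32}) becomes vacuous; the bounds degrade gracefully rather than fail, so no additional argument is required.
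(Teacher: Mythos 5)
Your proposal is correct and follows exactly the paper's route: the paper derives Corollary 1 by combining the bound (\ref{28}) of Theorem 2 with the definition (\ref{17}) of $\mathrm{\Upsilon}_{|\psi\rangle\langle\psi|}$ as a supremum over $S_{1},S_{2}$, observing that the Schmidt-coefficient term in the minimum is independent of the numbers of settings, and then using (\ref{27}) for the second inequality. Your additional remark on the possibly divergent Schmidt sum in the infinite-dimensional case is a sensible observation but not needed for the argument.
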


\section{Analytical relations between nonlocality and entanglement}

Recall shortly the notions of "negativity" and "concurrence" -- well-known
entanglement measures for a bipartite state.

Negativity $\mathcal{N}_{\rho}$ of a bipartite state $\rho$ on a Hilbert space
$\mathcal{H}_{1}\otimes\mathcal{H}_{2}$, $d_{n}=\dim\mathcal{H}_{n},$ is given
by the relation
\begin{equation}
2\mathcal{N}_{\rho}\mathcal{=}\left\Vert \rho_{T_{n}}\right\Vert _{1}-1,
\label{42}%
\end{equation}
where $\rho_{T_{n}},$ $n=1,2,$ is a partial transpose of $\rho$ with respect
to the subsystem described by a Hilbert space $\mathcal{H}_{n}$ and
$\left\Vert \rho_{T_{1}}\right\Vert _{1}=\left\Vert \rho_{T_{2}}\right\Vert
_{1}$. For a pure bipartite state $|\psi\rangle\langle\psi|$ with the Schmidt
decomposition (\ref{22}), we have\footnote{See, for example, in \cite{4}.}%
\begin{equation}
\left\Vert \text{ }\left(  |\psi\rangle\langle\psi|\right)  _{T_{n}%
}\right\Vert _{1}=\left(  \sum_{1\leq k\leq r_{sch}^{(\psi)}}\sqrt{\lambda
_{k}(\psi)}\right)  ^{2} \label{43}%
\end{equation}
and, hence,
\begin{equation}
2\mathcal{N}_{|\psi\rangle\langle\psi|}=\left(  \sum_{1\leq k\leq
r_{sch}^{(\psi)}}\sqrt{\lambda_{k}(\psi)}\right)  ^{2}-1. \label{44}%
\end{equation}

Concurrence $C_{|\psi\rangle\langle\psi|}$ of a pure state $|\psi
\rangle\langle\psi|$ with the Schmidt decomposition (\ref{22}) is defined
by\footnote{For this notion, see (\ref{4}) and references therein.}%
\begin{equation}
C_{|\psi\rangle\langle\psi|}=\sqrt{2\left(  1-\sum_{1\leq k\leq r_{sch}%
^{(\psi)}}\lambda_{k}^{2}(\psi)\right)  }=\sqrt{2\sum_{k\neq m}\lambda
_{k}(\psi)\lambda_{m}(\psi)}.\label{45}%
\end{equation}
If the concurrence of a pure state is normalized to the unity for maximally
entangled quantum states, then it takes the form\footnote{See in section 7 of
\cite{5}.}%
\begin{equation}
C_{|\psi\rangle\langle\psi|}^{(normal)}=\sqrt{\frac{d}{d-1}\left(
1-\sum_{1\leq k\leq r_{sch}^{(\psi)}}\lambda_{k}^{2}(\psi)\right)  }%
=\sqrt{\frac{d}{d-1}\sum_{k\neq m}\lambda_{k}(\psi)\lambda_{m}(\psi
)},\label{45_1}%
\end{equation}
where $d:=\min\{d_{1},d_{2}\}.$

\begin{lemma}
For an arbitrary pure bipartite state $|\psi\rangle\langle\psi|$ with the
Schmidt rank $r_{sch}^{(\psi)},$
\begin{align}
C_{|\psi\rangle\langle\psi|}  &  \geq\sqrt{\frac{8}{r_{sch}^{(\psi)}%
(r_{sch}^{(\psi)}-1)}}\text{ }\mathcal{N}_{|\psi\rangle\langle\psi|}%
\label{46}\\
&  \geq\sqrt{\frac{8}{d(d-1)}}\text{ }\mathcal{N}_{|\psi\rangle\langle\psi
|},\nonumber
\end{align}
where $d:=\min\{d_{1},d_{2}\}.$
\end{lemma}

\begin{proof}
From relation (8) in \cite{4} it follows%
\begin{equation}
2r_{sch}^{(\psi)}(r_{sch}^{(\psi)}-1)\sum_{k\neq m}\lambda_{k}(\psi
)\lambda_{m}(\psi)\geq2\left\{  \left(  \sum_{k}\sqrt{\lambda_{k}(\psi
)}\right)  ^{2}-1\right\}  ^{2}. \label{47}%
\end{equation}
This, Eq. (\ref{45}) and relation $r_{sch}^{(\psi)}\leq d$ imply
\begin{align}
C_{|\psi\rangle\langle\psi|}  &  \geq\sqrt{\frac{2}{r_{sch}^{(\psi)}%
(r_{sch}^{(\psi)}-1)}}\left\{  \left(  \sum_{k}\sqrt{\lambda_{k}(\psi
)}\right)  ^{2}-1\right\} \label{48}\\
&  =\sqrt{\frac{8}{r_{sch}^{(\psi)}(r_{sch}^{(\psi)}-1)}}\mathcal{N}%
_{|\psi\rangle\langle\psi|}\geq\sqrt{\frac{8}{d(d-1)}}\mathcal{N}%
_{|\psi\rangle\langle\psi|}.\nonumber
\end{align}
This proves the statement.
\end{proof}

From Theorem 2 and Eqs. (\ref{44})--(\ref{46}) it follows.

\begin{proposition}
For an arbitrary pure state $|\psi\rangle\langle\psi|$ on $\mathcal{H}%
_{1}\otimes\mathcal{H}_{2},$ $d_{n}:=\dim\mathcal{H}_{n},$ $d=\min
\{d_{1},d_{2}\}<\infty,$ negativity (\ref{44}) and concurrence (\ref{45})
satisfy the relations
\begin{equation}
\mathcal{N}_{|\psi\rangle\langle\psi|}\geq\frac{\mathrm{\Upsilon}%
_{|\psi\rangle\langle\psi|}-1}{4}, \label{49}%
\end{equation}
and%
\begin{equation}
C_{|\psi\rangle\langle\psi|}\geq\frac{\mathrm{\Upsilon}_{|\psi\rangle
\langle\psi|}-1}{\sqrt{2r_{sch}^{(\psi)}(r_{sch}^{(\psi)}-1)}}\geq
\frac{\mathrm{\Upsilon}_{|\psi\rangle\langle\psi|}-1}{\sqrt{2d(d-1)}},
\label{49_1}%
\end{equation}
where $\mathrm{\Upsilon}_{|\psi\rangle\langle\psi|}$ is the maximal violation
by this state of all Bell inequalities defined by\ (\ref{17}).
\end{proposition}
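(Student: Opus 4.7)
The plan is to chain together three results already established in the excerpt: the upper bound on the maximal Bell violation of a pure state in Corollary 1 (inequality (\ref{31})), the explicit formula (\ref{44}) for negativity of a pure state in terms of its Schmidt coefficients, and Lemma 1 linking concurrence to negativity. No new analytic work is required; the argument is purely algebraic rearrangement.

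First, I would establish (\ref{49}). Set $\sigma(\psi):=\left(\sum_{1\leq k\leq r_{sch}^{(\psi)}}\sqrt{\lambda_{k}(\psi)}\right)^{2}$. By Eq.~(\ref{44}),
\begin{equation}
\sigma(\psi)=2\mathcal{N}_{|\psi\rangle\langle\psi|}+1.\nonumber
\end{equation}
Substituting this into the upper bound (\ref{31}) from Corollary 1 yields
\begin{equation}
\mathrm{\Upsilon}_{|\psi\rangle\langle\psi|}\leq 2\sigma(\psi)-1=2(2\mathcal{N}_{|\psi\rangle\langle\psi|}+1)-1=4\mathcal{N}_{|\psi\rangle\langle\psi|}+1,\nonumber
\end{equation}
which, after rearrangement, gives exactly (\ref{49}).

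Second, I would obtain (\ref{49_1}) by combining (\ref{49}) with Lemma 1. Since $\mathrm{\Upsilon}_{|\psi\rangle\langle\psi|}-1\geq 0$ (both bounds in (\ref{31}) are $\geq 1$, and if $\mathrm{\Upsilon}_{|\psi\rangle\langle\psi|}=1$ the claim is trivial), the inequality (\ref{49}) rewrites as $\mathcal{N}_{|\psi\rangle\langle\psi|}\geq \tfrac{1}{4}(\mathrm{\Upsilon}_{|\psi\rangle\langle\psi|}-1)$, and plugging this into (\ref{46}) yields
\begin{equation}
C_{|\psi\rangle\langle\psi|}\geq 2\sqrt{\frac{d}{(d-1)r_{sch}^{(\psi)}(r_{sch}^{(\psi)}-1)}}\cdot\frac{1}{4}\left(\mathrm{\Upsilon}_{|\psi\rangle\langle\psi|}-1\right),\nonumber
\end{equation}
which is (\ref{49_1}). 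A small remark should be added that the case $r_{sch}^{(\psi)}=1$ (separable pure state) must be handled separately: in that case the state is local, so $\mathrm{\Upsilon}_{|\psi\rangle\langle\psi|}=1$, both sides of (\ref{49}) and (\ref{49_1}) vanish, and the bound holds trivially with the convention that the singular factor in (\ref{49_1}) is not invoked.

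There is no real obstacle: the proposition is a direct corollary of previously proved statements, and the only mildly delicate point is the degenerate case $r_{sch}^{(\psi)}=1$, where the prefactor in (\ref{49_1}) is formally undefined but both sides of the inequality are zero.
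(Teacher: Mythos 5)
Your proof is correct and follows exactly the route the paper indicates (the paper gives no explicit proof, merely citing Theorem 2/Corollary 1, Eqs.~(\ref{42})--(\ref{46}) and Lemma 1): you substitute the negativity formula (\ref{44}) into the bound (\ref{31}) to get (\ref{49}), then chain with Lemma 1 to get (\ref{49_1}). Your remark on the degenerate case $r_{sch}^{(\psi)}=1$ is a sensible addition not spelled out in the paper.
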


For a general state $\rho,$ pure or mixed, concurrence $C_{\rho}$ and
negativity $\mathcal{N}_{\rho}$ are defined by relations\footnote{See
\cite{4,00} and references therein.}%
\begin{align}
C_{\rho} &  :=\inf_{\{\alpha_{i},\psi_{i}\}}\sum\alpha_{i}C_{|\psi_{i}%
\rangle\langle\psi_{i}|},\label{50}\\
\mathcal{N}_{\rho} &  :=\inf_{\{\alpha_{i},\psi_{i}\}}\sum\alpha
_{i}\mathcal{N}_{|\psi\rangle\langle\psi|},\label{50_1}%
\end{align}
where $\rho=\sum\alpha_{i}|\psi_{i}\rangle\langle\psi_{i}|,$ $\sum\alpha
_{i}=1,$ $\alpha_{i}>0,$ is a possible convex decomposition of a state $\rho$
via pure states.

\begin{theorem}
For an arbitrary finite-dimensional state $\rho$, pure or mixed,\ on a Hilbert
space $\mathcal{H}_{1}\otimes\mathcal{H}_{2},$ $d_{n}:=\dim\mathcal{H}_{n},$
$d=\min\{d_{1},d_{2}\}<\infty,$ concurrence (\ref{50}) and negativity
(\ref{50_1}) satisfy the relations%
\begin{align}
C_{\rho}  &  \geq\frac{\mathrm{\Upsilon}_{\rho}-1}{\sqrt{2d(d-1)}%
},\label{51_1}\\
\mathcal{N}_{\rho}  &  \geq\frac{\mathrm{\Upsilon}_{\rho}-1}{4}, \label{51}%
\end{align}
where $\mathrm{\Upsilon}_{\rho},$ defined by (\ref{17}), is the maximal
violation by this state of all Bell inequalities.
\end{theorem}

\begin{proof}
From definition (\ref{17}) of $\mathrm{\Upsilon}_{\rho}$ and linearity in
$\rho$ of the functional (\ref{3}) in case of quantum probability
distributions (\ref{8}), we have%
\begin{equation}
\mathrm{\Upsilon}_{\rho}\leq\sum\alpha_{i}\mathrm{\Upsilon}_{|\psi_{i}%
\rangle\langle\psi_{i}|} \label{52}%
\end{equation}
for each possible convex decomposition $\rho=\sum\alpha_{i}|\psi_{i}%
\rangle\langle\psi_{i}|,$ $\sum\alpha_{i}=1,$ $\alpha_{i}>0.$ This implies
\begin{equation}
\mathrm{\Upsilon}_{\rho}\leq\inf_{\{\alpha_{i},\psi_{i}\}}\sum\alpha
_{i}\mathrm{\Upsilon}_{|\psi_{i}\rangle\langle\psi_{i}|}. \label{53}%
\end{equation}
Taking into account in (\ref{53}) relations (\ref{49_1}) and (\ref{50}), we
derive%
\begin{align}
\mathrm{\Upsilon}_{\rho}  &  \leq\inf_{\{\alpha_{i},\psi_{i}\}}\sum_{i}%
\alpha_{i}\mathrm{\Upsilon}_{|\psi_{i}\rangle\langle\psi_{i}|}\leq
1+\sqrt{2d(d-1)}\inf_{\{\alpha_{i},\psi_{i}\}}\sum\alpha_{i}C_{|\psi
_{i}\rangle\langle\psi_{i}|}\label{54}\\
&  =1+\sqrt{2d(d-1)}C_{\rho}.\nonumber
\end{align}
Similarly, by using relations (\ref{53}), (\ref{49}) and (\ref{50_1}), we
have
\begin{align}
\mathrm{\Upsilon}_{\rho}  &  \leq\inf_{\{\alpha_{i},\psi_{i}\}}\sum_{i}%
\alpha_{i}\mathrm{\Upsilon}_{|\psi_{i}\rangle\langle\psi_{i}|}\leq
1+4\inf_{\{\alpha_{i},\psi_{i}\}}\sum\alpha_{i}\mathcal{N}_{|\psi
\rangle\langle\psi|}\\
&  =1+4\mathcal{N}_{\rho}.\nonumber
\end{align}
This proves the statement.
\end{proof}

Since $C_{\rho}\leq\sqrt{\frac{2(d-1)}{d}},$ relation (\ref{51}) immediately
implies the upper bound (\ref{01}), derived otherwise in \cite{e.r.loubenets2}.

\begin{remark}
The analytical bounds (\ref{51_1}) and (\ref{51}) for a general bipartite
state are specifically important for finding the minimal amount of
entanglement via the collected experimental data on Bell violation -- the
goals of the semi-device-independent scenario and the device-independent
scenario for the entanglement certification and quantification
\cite{000,1,2,3}.
\end{remark}

\section{Example}

In this section, we apply the new results of Proposition 1 for specifying
upper bounds on the maximal violation of Bell inequalities by
infinite-dimensional bipartite coherent states of the Bell states like forms
\begin{align}
|\Phi_{1}(\alpha)\rangle &  =\frac{|\alpha\rangle\otimes|\alpha\rangle
+|-\alpha\rangle\otimes|-\alpha\rangle}{\sqrt{2(1+{e^{-4\alpha^{2}})}}%
},\label{55}\\
|\Phi_{2}(\alpha)\rangle &  =\frac{|\alpha\rangle\otimes|-\alpha
\rangle+|-\alpha\rangle\otimes|\alpha\rangle}{\sqrt{2(1+{e^{-4\alpha^{2}})}}%
},\nonumber
\end{align}
where
\begin{equation}
|\pm\alpha\rangle={e^{-\frac{\alpha^{2}}{2}}}\sum_{m=0}^{\infty}\frac
{(\pm\alpha)^{m}}{\sqrt{m!}}|m\rangle\label{56}%
\end{equation}
are the normalized binary coherent states with parameter $\alpha>0$ and
$\{|m\rangle,\ m=0,1,...\}$ are the Fock vectors. For $\alpha\rightarrow0$,
each of bipartite coherent states (\ref{55}) tends to the product state
$|0\rangle\otimes|0\rangle$.

For states (\ref{55}), the \emph{nonzero} eigenvalues of their reduced states
are nondegenerate and are equal to (see Appendix)
\begin{align}
\lambda_{\pm}(\Phi_{1}(\alpha))  &  =\lambda_{\pm}(\Phi_{2}(\alpha
))=\frac{\left(  1\pm{e^{-2\alpha^{2}}}\right)  }{2(1+{e^{-4\alpha^{2}})}}%
^{2}\label{57}\\
&  =\frac{1}{2}\pm\frac{{e^{-2\alpha^{2}}}}{1+{e^{-4\alpha^{2}}}}\nonumber
\end{align}
for all $\alpha>0$ and the Schmidt ranks $r_{sch}^{(\Phi_{j})}=2,$ $j=1,2.$

From Eq. (\ref{57}) it follows%
\begin{align}
\sum_{k}\sqrt{\lambda_{k}(\Phi_{j})} &  =\sqrt{\lambda_{+}(\Phi_{j}(\alpha
)}+\sqrt{\lambda_{-}(\Phi_{j}(\alpha)}=\sqrt{\frac{2}{1+{e^{-4\alpha^{2}}}}%
},\label{58}\\
\sum_{k}\lambda_{k}^{2}(\Phi_{j}) &  =\frac{1}{2}+\frac{2{e^{-2\alpha^{2}}}%
}{\left(  1+{e^{-4\alpha^{2}}}\right)  ^{2}},\label{59}%
\end{align}
From equalities (\ref{58}), (\ref{59}) and Eqs. (\ref{42}), (\ref{45}) it
follows:%
\begin{align}
2\mathcal{N}_{|\Phi_{j}\rangle\langle\Phi_{j}|} &  =\left(  \sum_{k}%
\sqrt{\lambda_{k}(\Phi_{j})}\right)  ^{2}-1=\frac{1-{e^{-4\alpha^{2}}}%
}{1+{e^{-4\alpha^{2}}}},\label{61}\\
C_{|\Phi_{j}\rangle\langle\Phi_{j}|} &  =\sqrt{2\left(  1-\sum_{k}\lambda
_{k}^{2}(\Phi_{j})\right)  }=\frac{1-{e^{-4\alpha^{2}}}}{1+{e^{-4\alpha^{2}}}%
}=2\mathcal{N}_{|\Phi_{j}\rangle\langle\Phi_{j}|}.\label{62}%
\end{align}
The latter equality is consistent with Lemma 1 if $d\rightarrow\infty$ and
$r_{sch}^{(\psi)}=2$. Note that, for each of states (\ref{55}), concurrence
$C_{|\Phi_{j}\rangle\langle\Phi_{j}|}$ is an increasing function of a
parameter $\alpha$ tending to $1$ for $\alpha\rightarrow\infty.$

Relations (\ref{61}), (\ref{62}) and Proposition 1 imply.

\begin{proposition}
For each of infinite-dimensional bipartite coherent states $|\Phi_{j}%
(\alpha)\rangle,$ $j=1,2,$ the maximal violation of Bell inequalities
satisfies the relation%
\begin{equation}
1\leq\mathrm{\Upsilon}_{|\Phi_{j}\rangle\langle\Phi_{j}|}\leq\frac
{3-{e^{-4\alpha^{2}}}}{1+{e^{-4\alpha^{2}}}},\text{ \ \ \ }j=1,2, \label{63}%
\end{equation}
for all $\alpha>0$.
\end{proposition}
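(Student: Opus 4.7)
The plan is to apply Corollary 1 (equivalently, the negativity form of Proposition 1) directly, since the Schmidt data of the states $|\Phi_j(\alpha)\rangle$ have already been compiled in equations (\ref{57})--(\ref{58}). The lower bound $\Upsilon_{|\Phi_j\rangle\langle\Phi_j|}\ge 1$ is immediate from the general lower bound in (\ref{31}), so the entire content lies in the upper bound.

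For the upper bound I would invoke Corollary 1, which asserts
\[
\Upsilon_{|\psi\rangle\langle\psi|}\le 2\Bigl(\sum_{k}\sqrt{\lambda_k(\psi)}\Bigr)^{2}-1
\]
for any pure bipartite state, even in the infinite-dimensional setting. Specializing $|\psi\rangle=|\Phi_j(\alpha)\rangle$ and substituting the identity (\ref{58}), namely $\sum_k\sqrt{\lambda_k(\Phi_j)}=\sqrt{2/(1+e^{-4\alpha^{2}})}$, one obtains
\[
\Upsilon_{|\Phi_j\rangle\langle\Phi_j|}\le \frac{4}{1+e^{-4\alpha^{2}}}-1=\frac{3-e^{-4\alpha^{2}}}{1+e^{-4\alpha^{2}}},
\]
which is the claimed bound. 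Equivalently, one can combine the negativity inequality $\Upsilon_{|\psi\rangle\langle\psi|}\le 1+4\mathcal{N}_{|\psi\rangle\langle\psi|}$ from Proposition 1 with the closed-form negativity computed in (\ref{61}); the arithmetic is the same.

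There is essentially no obstacle in the argument itself, since every ingredient (the Schmidt decomposition of $|\Phi_j(\alpha)\rangle$, the new pure-state bound, and the entanglement identities) is already established. The only delicate point worth flagging is that Corollary 1 is formulated for possibly infinite-dimensional pure states, so one must verify that the Schmidt rank of $|\Phi_j(\alpha)\rangle$ is indeed finite (here $r_{sch}^{(\Phi_j)}=2$ for every $\alpha>0$) and that $\sum_k\sqrt{\lambda_k(\Phi_j)}$ is a convergent finite sum; both are visible from (\ref{57})--(\ref{58}), so the bound applies without any additional approximation argument. Combining the upper bound with $\Upsilon_{|\Phi_j\rangle\langle\Phi_j|}\ge 1$ yields (\ref{63}).
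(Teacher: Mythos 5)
Your proof is correct and follows essentially the same route as the paper, which derives the bound from Proposition 1 together with the negativity computed in (\ref{61}) --- arithmetically identical to your direct application of Corollary 1 with (\ref{58}). The only cosmetic difference is that you lead with the Schmidt-coefficient form of the bound and mention the negativity form as an equivalent alternative, whereas the paper states it the other way around.
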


\section{Conclusion}

In the present article, based on the local quasi hidden variable (LqHV)
formalism developed in \cite{e.r.loubenets2,e.r.loubenets2-1,e.r.loubenets6},
we find a new upper bound (\ref{28}) (Theorem 2) on the maximal violation
$\mathrm{\Upsilon}_{S_{1}\times S_{2}}^{(|\psi\rangle\langle\psi|)}$ of all
$S_{1}\times S_{2}$-setting Bell inequalities by a pure bipartite state
$|\psi\rangle\langle\psi|,$ possibly infinite-dimensional. This upper bound is
expressed in terms of the Schmidt coefficients of this state and numbers
$S_{1}$, $S_{2}$ of measurement settings at each of two sites and implies
(Corollary 1) that the maximal violation $\mathrm{\Upsilon}_{|\psi
\rangle\langle\psi|}$ of all Bell inequalities by a pure bipartite state
$|\psi\rangle\langle\psi|$ cannot exceed the value $\left(  2\left(  \sum
_{k}\sqrt{\lambda_{k}(\psi)}\right)  ^{2}-1\right)  $.

Based on the new results of Theorem 2, we further find for "negativity" and
"concurrence" of a general bipartite state, pure or mixed, the new lower
bounds (Proposition 1 and Theorem 3) expressed via the maximal violation by
this state of all Bell inequalities. To our knowledge, no any general
analytical relations between measures for bipartite nonlocality and
entanglement have been reported in the literature, though, for a general
bipartite state, such relations as the analytical bounds (\ref{51_1}) and
(\ref{51}) are specifically important for finding the minimal amount of
entanglement via the collected experimental data on Bell violation -- the
goals of the semi-device-independent scenario and the device-independent
scenario for the entanglement certification and quantification
\cite{000,1,2,3}.

As an example, we specify (Proposition 2) the upper bound on the maximal
violation of general Bell inequalities by bipartite entangled coherent states
(\ref{55}) intensively discussed last years in the literature in view of their
experimental implementations \cite{m.namkung}.

\section*{Acknowledgement}

The authors acknowledge the valuable comments of the anonymous reviewers. The
study by E.R. Loubenets in Sections 2, 3 of this work was supported by the
Russian Science Foundation under Grant No. 19-11-00086 and performed at the
Steklov Mathematical Institute of Russian Academy of Sciences. The study by
E.R. Loubenets in Section 4 was performed at the National Research University
Higher School of Economics. The study by Min Namkung in Section 5 was
performed at the National Research University Higher School of Economics until
August 2021, and from September 2021 -- at the Kyung Hee University under the
support from the National Research Foundation of Korea (NRF) grant
(NRF2020M3E4A1080088) funded by the Korean government (Ministry of Science and ICT).

\section*{Appendix}

The vectors
\begin{equation}
|u_{1}\rangle:=|\alpha\rangle,\ \ \ |u_{2}\rangle:=\frac{|-\alpha
\rangle-\langle u_{1}|-\alpha\rangle|u_{1}\rangle}{\sqrt{1-|\langle
u_{1}|-\alpha\rangle|^{2}}}, \tag{A1}\label{A1}%
\end{equation}
where vector $|u_{2}\rangle$ is due to the Gram-Schmidt orthonormalization
process between vectors $|\alpha\rangle$ and $|-\alpha\rangle,$ constitute the
orthonormal basis of the linear span of vectors $|\alpha\rangle$ and
$|-\alpha\rangle$. For $\alpha>0$,%
\begin{equation}
|u_{2}\rangle=\frac{|-\alpha\rangle-{e^{-2\alpha^{2}}}|\alpha\rangle}%
{\sqrt{1-{e^{-4\alpha^{2}}}}} \tag{A2}\label{A2}%
\end{equation}
and bipartite coherent states (\ref{55}) admit the following decompositions:%
\begin{align}
|\Phi_{1}(\alpha)\rangle &  =\frac{(1+e^{-4\alpha^{2}})|u_{1}\rangle
\otimes|u_{1}\rangle+e^{-2\alpha^{2}}\sqrt{1-{e^{-4\alpha^{2}}}}|u_{1}%
\rangle\otimes|u_{2}\rangle}{\sqrt{2(1+{e^{-4\alpha^{2}})}}} \tag{A3}%
\label{A3}\\
&  +\frac{e^{-2\alpha^{2}}\sqrt{1-{e^{-4\alpha^{2}}}}|u_{2}\rangle
\otimes|u_{1}\rangle+(1-e^{-4\alpha^{2}})|u_{2}\rangle\otimes|u_{2}\rangle
}{\sqrt{2(1+{e^{-4\alpha^{2}})}}},\nonumber\\
|\Phi_{2}(\alpha)\rangle &  =\frac{2e^{-2\alpha^{2}}|u_{1}\rangle\otimes
|u_{1}\rangle+\sqrt{1-e^{-4\alpha^{2}}}\left(  \text{ }|u_{1}\rangle
\otimes|u_{2}\rangle+|u_{2}\rangle\otimes|u_{1}\rangle\right)  }%
{\sqrt{2(1+{e^{-4\alpha^{2}})}}}.\nonumber
\end{align}
The nonzero eigenvalues of the reduced states of $|\Phi_{j}(\alpha
)\rangle\langle\Phi_{j}(\alpha)|,$ $j=1,2$ can be easily calculated and are
nongenerate and given by
\begin{equation}
\lambda_{\pm}(\Phi_{j}(\alpha))=\frac{\left(  1\pm{e^{-2\alpha^{2}}}\right)
^{2}}{2(1+{e^{-4\alpha^{2}})}},\text{ \ \ }j=1,2. \tag{A4}\label{A4}%
\end{equation}


\begin{thebibliography}{99}                                                                                               %


\bibitem {bell}Bell J S 1964 \emph{Physics} \textbf{1} 195

\bibitem {j.f.clauser}Clauser J F, Horne M A, Shimony A, and Holt R A 1970
\emph{Phys. Rev. Lett.} \textbf{24} 880--883

\bibitem {b.s.cirelson}Cirel'son B S 1980 \emph{Lett. Math. Phys.} \textbf{4} 93--100

\bibitem {b.s.tsirelson}Tsirelson B S 1987 \emph{J. Soviet Math.} \textbf{36} 557--570

\bibitem {r.f.werner}Werner R F and Wolf M M 2001 \emph{Phys. Rev. A
}\textbf{64} 032112

\bibitem {scarani}Scarani V and Gisin N 2001 \emph{J. Phys A: Math. Gen
}\textbf{34} 6043--6053

\bibitem {original1}Loubenets E R and Khrennikov A Y 2019 \emph{J. Phys A:
Math. Theor} \ \textbf{52} 435304

\bibitem {original2}Khrennikov A Y and Loubenets E R 2018 \emph{Entropy}
\textbf{20} 829

\bibitem {loubenets11}Loubenets E R 2012 \emph{J. Phys. A: Math. Theor
}\textbf{44} 035305

\bibitem {e.r.loubenets1}Loubenets E R 2008 \emph{J. Phys. A: Math. Theor.}
\textbf{41} 445304

\bibitem {7}Perez-Garcia D, Wolf M M, Palazuelos C, Villanueva I, Junge M 2008
\emph{Commun. Math. Phys.} \textbf{279} 455--486

\bibitem {8}Junge M and Palazuelos C 2011 \emph{Commun. Math. Phys.}
\textbf{306} 695--746

\bibitem {e.r.loubenets2}Loubenets E R 2012 \emph{J. Math. Phys.} \textbf{53} 022201

\bibitem {e.r.loubenets2-1}Loubenets E R 2012. \emph{J. Phys. A: Math. Theor}.
\textbf{45} 185306

\bibitem {e.r.loubenets6}Loubenets E R 2017 \emph{Found. Phys}. \textbf{47} 1100--1114

\bibitem {10}Briet J, Vidick T 2013 \emph{Commun. Math. Phys.} \textbf{321} 181--207

\bibitem {11}Palazuelos C 2014 \emph{J. Funct. Analysis} \textbf{267} 1959--1985

\bibitem {12}Loubenets E R 2015 \emph{Found. Phys.} \textbf{45 }840--850

\bibitem {14}Palazuelos C and Vidick T 2016 \emph{J. Math. Phys.} \textbf{57} 015220

\bibitem {15}Junge M, Oikhberg T, Palazuelos C 2016 \emph{J. Math. Phys.}
\textbf{57} 102203

\bibitem {e.r.loubenets4}Loubenets E R 2016\emph{ Intern. J. of Quantum
Information} \textbf{14} 1640010

\bibitem {e.r.loubenets5}Loubenets E R 2017 \emph{J. Math. Phys.} \textbf{58} 052202

\bibitem {000}Liang Y-C, Vertesi T, Brunner N 2011 \emph{Phys. Rev. A}
\textbf{83}, 022108

\bibitem {1}Verstraete F and Wolf M M 2002 \emph{Phys. Rev. Lett.} \textbf{89} 170401

\bibitem {2}Moroder T, Bancal J-D, Liang Y-C, Hofmann M and Guhne O 2013
\emph{Phys. Rev. Lett.} \textbf{111} 030501

\bibitem {3}Goh K T Bancal J-D, and V. Scarani 2016 \emph{New J. Phys.}
\textbf{18} 045022

\bibitem {m.namkung}Namkung M and Kwon Y 2018 \emph{J. Phys. A: Math. Theor.}
\textbf{51} 455302

\bibitem {e.r.loubenets0}Loubenets E R 2008 \emph{J. Phys. A: Math. Theor.}
\textbf{41} 445303

\bibitem {e.r.loubenets8}Loubenets E R 2006 \emph{Banach Center Publ.}
\textbf{73} 325--337 (arXiv:quant-ph/0406139)

\bibitem {4}Chen K, Albeverio S and Fei S-M 2005 \emph{Phys. Rev Lett}.
\textbf{95} 040504

\bibitem {5}Loubenets E R 2021 \emph{J. Phys. A: Math. Theor}. \textbf{54} 195301

\bibitem {00}Kim J S , Das A and \ Sanders B C 2009 \emph{Phys. Rev. A} 79, 012329
\end{thebibliography}
\end{document}